\documentclass[lettersize,journal]{IEEEtran}
\IEEEoverridecommandlockouts
\usepackage{cite}
\usepackage{amsmath,amssymb,amsfonts}
\usepackage{graphicx}
\usepackage{textcomp}
\usepackage{xcolor}
\usepackage{todonotes}
\usepackage{multirow}
\usepackage{enumitem}
\usepackage{caption}
\usepackage{algorithm}
\usepackage{algpseudocode}
\usepackage{comment}
\usepackage{amsthm}
\usepackage{hyperref}
\usepackage{booktabs}
\usepackage{array}
\usepackage[most]{tcolorbox}
\usepackage{tabularx}
\usepackage{minted}
\usepackage{stmaryrd}
\usepackage{changepage}
\usepackage{orcidlink}
\begin{document}

\newcommand{\ab}[1]{\textcolor{blue}{\textbf{AB:} #1}}
\renewcommand{\todo}[1]{\textcolor{red}{\textbf{TODO:} #1}}
\def\BibTeX{{\rm B\kern-.05em{\sc i\kern-.025em b}\kern-.08em
    T\kern-.1667em\lower.7ex\hbox{E}\kern-.125emX}}
\newcommand{\aren}[1]{\todo[color=yellow]{#1}}
\newcommand{\percy}[1]{\textcolor{red}{\textbf{Percy:} #1}}

\newtheoremstyle{compact}
  {2pt}          % space above
  {2pt}          % space below
  {\normalfont}  % body font, same as definition
  {}             % indent amount
  {\bfseries}    % theorem head font
  {.}            % punctuation after theorem head
  {0.5em}        % space after theorem head
  {}             % theorem head spec

\theoremstyle{compact}
\newtheorem{theorem}{Theorem}

\providecommand{\theoremautorefname}{Theorem}
\providecommand{\algorithmautorefname}{Algorithm}
\providecommand{\listingautorefname}{Listing}

\algrenewcommand\algorithmiccomment[1]{\hfill$\triangleright$\ #1}

\newcommand{\refinery}{\texttt{Refinery}}
\newcommand{\clevr}{CLEVR}

\newcommand{\approach}{\texttt{MeMoDecode}}
\newcommand{\naive}{\emph{Unc}}
\newcommand{\syntax}{\emph{Grammar}}

\newcommand{\llm}{LLM}
\newcommand{\seq}[1]{\pmb{#1}}

\newcommand{\aab}[1]{\textcolor{blue}{\textbf{AAB:} #1}}

\newtcolorbox{summarybox}{
  enhanced,
  colback=gray!8,
  colframe=gray!8,
  boxrule=0pt,
  borderline west={4pt}{0pt}{gray!65},
  arc=1pt,
  left=6pt,
  right=6pt,
  top=3pt,
  bottom=3pt,
  boxsep=1pt,
  before skip=2pt,
  after skip=0pt,
}

% \title{Metamodel Guided Decoding-Time Verification for Consistent DSL Generation with LLMs
% }

% \title{Metamodel-Guided LLM Decoding for \\Consistent DSL Generation}

\title{Metamodel-Guided LLM Decoding for \\Consistent DSL Generation}

\author{\IEEEauthorblockN{Boqi Chen\orcidlink{0000-0002-1451-3603}, Aren A. Babikian\orcidlink{0000-0002-8108-0043}, José Antonio Hernández López\orcidlink{0000-0003-2439-2136}}
\thanks{
Boqi Chen is with the University of Ottawa, Ottawa, Canada (email: boqi.chen@uottawa.ca)

Aren A. Babikian is with McGill University, Montreal, Canada (email: aren.babikian@mcgill.ca)

José Antonio Hernández López is with the University of Murcia, Murcia, Spain (email: joseantonio.hernandez6@um.es)
}}

% \markboth{TRANSACTION ON SOFTWARE ENGINEERING, VOL. XXX, NO. XXX, XXX 2026}%
% {Metamodel-Guided LLM Decoding for Consistent DSL Generation}

\maketitle

\begin{abstract}
Large language models (\llm{}s) are increasingly used to generate software engineering artifacts written in domain-specific languages (DSLs). Correct DSL instances usually must comply with not only the concrete syntax but also the abstract syntax defined by a metamodel augmented with semantic well-formedness constraints.
Existing constrained decoding approaches typically guarantee compliance only with concrete syntax or with a limited set of semantic constraints expressed over the generated text, making metamodel-level constraint consistency difficult to enforce during generation. 
We present \approach{}: a decoding framework that enforces a DSL's concrete syntax, abstract syntax, and well-formedness constraints jointly. 
Specifically, whenever the \llm{} generates a complete metamodel concept, \approach{} incrementally maps the current output to a partial model and verifies whether it can still be completed into a consistent instance. We combine \approach{} with a simple retry mechanism that resumes decoding from the most recent completable snapshot whenever an inconsistency is detected. Under a sound backend verifier, \approach{} never performs worse than grammar-constrained decoding on alignment metrics satisfying natural assumptions. Across four DSLs and six open-weight \llm{}s ranging from 0.6B to 70B parameters, \approach{} achieves the highest consistency, improving consistency by 23\% on average over unconstrained decoding and by 8\% on average over grammar-constrained decoding. It also improves benchmark performance while incurring only modest runtime overhead. These results show that metamodel-guided decoding is an effective approach for enforcing DSL consistency during decoding across diverse DSLs and \llm{} scales.
% \ab{one comment about the name that Marsha mentioned: if we are planning a sequence of papers, lets try to have a commonality in the names of our proposed things (e.g., a common prefix, like MeMoDecode, then next paper is MeMoSearch, then MeMoOptimize) },
 
\end{abstract}

\begin{IEEEkeywords}
constrained decoding, domain-specific languages, large language models, metamodels 
\end{IEEEkeywords}

\section{Introduction}
\label{sec:introduction}

\noindent\textbf{Context.}
Modern software engineering (SE) relies on a wide range of artifacts, from architecture design and configuration to query languages and flowcharts, many of which are expressed in \emph{domain-specific languages} (DSLs) tailored to a particular task or domain~\cite{fowler2010domain}. As large language models (\llm{}s) are increasingly used to automate SE activities, a growing portion of these artifacts is generated, either fully or partially, by \llm{}s, from natural language specifications~\cite{joel2025survey,liu2024large}. Beyond traditional SE workflows, DSL artifacts also serve as structured interfaces in LLM-based agentic systems, supporting multi-agent communication~\cite{shi2025flowagent,niu2025flow,liu2026synthesizing} and tool use~\cite{wang2024executable,geng2025jsonschemabench,qin2024toolllm}.

% \percy{One sentence about the relevance to agents? e.g. artifacts for tool calling}

In SE, especially in model-driven engineering, DSL instances are often used as machine-processable artifacts for downstream automation, analysis, model transformation, and code generation~\cite{garzon2015umple,mernik2005and,varro2007model}.
% As a result, correctness extends beyond syntactic validity. 
As a result, a DSL is typically characterized by three complementary aspects: its \emph{concrete syntax}, usually defined by a grammar; its \emph{abstract syntax}, captured by a metamodel; and a set of semantic \emph{well-formedness constraints} over that metamodel, with execution semantics defined on top of them~\cite{bergmann2010incremental,richters2002ocl,bettini2016implementing}. These aspects form a hierarchy of correctness. A generated instance may satisfy the grammar while still violating the metamodel or its constraints, preventing meaningful analysis, transformation, or execution. A DSL instance is \emph{consistent} if it conforms to both the metamodel and the well-formedness constraints~\cite{varro2018towards}.
However, \llm{}s provide \textit{no guarantee} that generated DSL instances satisfy these specifications. They may produce outputs that violate the grammar, the metamodel, or the well-formedness constraints, motivating the need for verification techniques~\cite{netz2024grammar,song2026empirical,chen2026projectional}.

A natural way to improve the reliability of generated DSL instances is \textit{post-hoc verification}, where a complete instance is first generated and then checked or repaired~\cite{chen2025accurate,he2025llm,khamsepour2025impact,lamas2026dslxpert}. However, these approaches discover violations only after a full generation. As a result, they may waste substantial decoding effort and require additional repair steps. Moreover, as LLM generation is conditioned on existing prefixes, when the generated instance contains early structural errors, subsequent parts of the output are generated from an already invalid context, potentially hurting the overall quality~\cite{zhang2024hallucinations,huang2024large}.

These limitations motivate decoding-time approaches that prevent invalid outputs before they are completed. \textit{Grammar-constrained decoding} masks, at each decoding step, tokens that would violate the grammar, thereby guaranteeing syntactic validity~\cite{beurer2023prompting,llguidance2024,willard2023efficient,ugare2024syncode,dong2025xgrammar,kellner2024guiding}. However, operating only on concrete syntax, it provides no guarantee that the generated DSL instance conforms to the metamodel or the associated constraints.

Recent approaches further incorporate semantic constraint verification into the decoding process~\cite{nagy2026chopchop,mundler2025type,agrawal2023monitor}. For example, IterGen~\cite{ugare2025itergen} augments grammar-constrained decoding with semantic checks. However, these checks are typically expressed over the generated text rather than the underlying structured model. Consequently, verification remains tied to a particular textual representation. More importantly, it limits the ability to reason about properties that are naturally defined over objects and references in the metamodel, such as multiplicity constraints, referential integrity, and transitive relationships.
As a result, decoding-time verification and guidance of metamodel-level constraint consistency over the underlying structured representation remain underexplored~\cite{chen2026projectional}.

\noindent\textbf{Approach.}
In this paper, we present \approach{} (\texttt{Me}ta\texttt{Mo}del-guided \texttt{Decod}ing), a decoding framework that simultaneously enforces a DSL's concrete syntax, abstract syntax, and semantic well-formedness constraints. The key idea is to lift verification from the generated text to the underlying model representation. As decoding proceeds, \approach{} incrementally maps the generated text to a partial model and, whenever the \llm{} completes a metamodel concept, verifies whether that partial model remains satisfiable with respect to the metamodel and its well-formedness constraints, that is, whether it can still be completed into a consistent instance. If not, 
the violation is irreparable, and 
\approach{} can immediately backtrack and retry during decoding rather than relying on post-hoc regeneration of the entire output.

\noindent\textbf{Contribution.}
Because verification is performed over the model representation rather than the concrete-syntax text, \approach{} can enforce structured metamodel-level properties that are difficult or impossible to express using grammar-based approaches. Furthermore, since constraints are specified directly over the metamodel, the verification process is independent of the concrete syntax of the DSL and can be applied across different textual representations.
% We also show that, under a sound verifier, \approach{} never produces an DSL instance that is less aligned with the ground truth than one generated by grammar-constrained decoding under identical decoding conditions, provided that consistency is a prerequisite for such alignment. %Intuitively, consistency checking instances artifacts that cannot be completed into a consistent instance and therefore cannot lead to a better-aligned final result.
More specifically, this paper makes the following contributions:
\begin{itemize}[leftmargin=10pt]
    \item We present \approach{}, a decoding framework that simultaneously enforces a DSL's concrete syntax, abstract syntax, and semantic well-formedness constraints by verifying whether partially generated outputs can still be completed into consistent instances and retrying otherwise.
    \item We establish a worst-case guarantee: under a sound verifier and natural assumptions on the alignment metric, \approach{} never produces an output that is less aligned than the one generated by grammar-constrained decoding under identical decoding conditions.
    \item We implement \approach{} by combining grammar-constrained decoding~\cite{llguidance2024} with the \texttt{Refinery} verifier~\cite{marussy2024refinery} and a lightweight decoding-time retry mechanism.
    \item We conduct a systematic evaluation across four DSLs and six open-weight \llm{}s ranging from 0.6B to 70B parameters. The results show that \approach{} consistently improves DSL consistency, increases alignment with the ground truth, and incurs only modest runtime overhead.
\end{itemize}
\noindent\textbf{Added value.}
\approach{} demonstrates that reasoning over the underlying model representation is key to ensuring consistency of a DSL instance. By lifting verification to the metamodel level, it enables the use of established partial modeling and constraint-solving techniques within the decoding process, yielding consistency guarantees independent of the textual representation. Since verification is decoupled from the underlying \llm{}, \approach{} can benefit directly from future advances in both \llm{}s and backend verifiers.
\section{Background and Related Work}
\label{sec:background}
\subsubsection{Domain-Specific Languages}
% A domain-specific language (DSL) has limited expressiveness that is tailored to a particular application domain~\cite{fowler2010domain}. 
A domain-specific language (DSL) is tailored to express the concepts of a particular application domain~\cite{fowler2010domain}.
By focusing on a specific domain, DSLs can provide substantially higher ease of use than general-purpose programming languages for the tasks they are designed to support~\cite{mernik2005and}. For instance, \autoref{lst:clevr-query-color} shows a CLEVR program specified using a DSL. A CLEVR program is a structured, machine-executable representation of a natural-language question about an image~\cite{johnson2017clevr}. The listing illustrates a minimal program to answer the question \textit{``What color is the cube?''}. The program begins with a \texttt{scene} node that returns the full set of objects in the scene, which is then passed to \texttt{filter\_shape} to retain only cube-shaped objects. The \texttt{unique} function asserts that exactly one such object exists, promoting the set of objects to a single object. Finally, \texttt{query\_color} is applied to retrieve the color of that object.

% \aab{I am thinking we could have an informal explanation next to the graph repr, to make the mapping easier from graph to concepts. But this may be ovrkill. In that case, i would add to the caption that this is a graph DSL instance. basically making the connection to DSL explicit.}

\begin{listing}[tbh]
\begin{Verbatim}[
    bgcolor=gray!10,
    fontsize=\small,
    framesep=2mm
]
graph LR
    1["scene"] --> 2["filter_shape_cube"]
    2 --> 3["unique"]
    3 --> 4["query_color"]
\end{Verbatim}
\caption{A CLEVR DSL for \textit{``What color is the cube?''} using the DSL definition of \autoref{fig:clevr_example}}
\label{lst:clevr-query-color}
\end{listing}

% When designing a DSL, language engineers typically define three main aspects of the language (see \autoref{fig:clevr_example}): (1) the concrete syntax, (2) the abstract syntax, and (3) a set of well-formedness constraints.

\subsubsection{DSL specifications}
In SE, and especially in model-driven engineering, a DSL is characterized by three aspects: a concrete syntax, an abstract syntax, and a set of semantic well-formedness constraints~\cite{bergmann2010incremental,richters2002ocl,bettini2016implementing} (see \autoref{fig:clevr_example}). Formally, a DSL specification can be defined as
\(
    \mathcal{D} = (G, M, \mathcal{C}),
\)
where \(G\) is a context-free grammar (CFG) defining the concrete syntax, \(M\) is a metamodel defining the abstract syntax, and \(\mathcal{C}\) is a set of semantic well-formedness constraints over \(M\).

% The concrete syntax specifies the notation that end users employ to interact with the language. This notation can be either textual or graphical. In the case of textual DSLs, which are the focus of this paper, language engineers define the concrete syntax through a formal grammar~\cite{xx}. When a valid program is parsed, the parser produces a \textit{concrete syntax tree} (CST) that captures the syntactic structure of the input according to the grammar. For instance, for the CLEVR example, the language engineer could have defined the context-free grammar depicted in \autoref{fig:clevr_example}. 

The concrete syntax specifies the notation through which users interact with DSL instances. This paper focuses on textual DSLs whose concrete syntax is defined by a CFG \(G\). In the \clevr{} case, a language engineer could define the CFG shown in \autoref{fig:clevr_example}. The \emph{non-terminals} of \(G\) define textual forms for different syntactic categories of the language, such as \texttt{edge}, program \texttt{node}, and the \texttt{type} of a node.
% \aab{Question: how do we define a syntactic unit? is every non-terminal a syntactic unit? What if there are multiple levels of non-terminals? Should we instead say these top-level non-terminals (which are direct children of the root node)? but for this we woudl need to check the MM I guess}
We refer to these completed derivations of selected non-terminals as \textit{syntactic units}. A text sequence \(\seq{y}\) is syntactically valid if it belongs to the language generated by \(G\), i.e., \(\seq{y} \in L(G)\).

%For example, SQL is primarily used as a textual language and therefore has an associated formal grammar that defines its syntax.

%\aab{should we be more formal here?}
%\aab{maybe a more detailed running example, like one ofthe case studies we are running?}

The abstract syntax defines a structural representation of DSL instances and is commonly specified through a \textit{metamodel}~\cite{bettini2016implementing}. A metamodel \(M\) defines the types of elements that may appear in an instance model, including classes, attributes, references, containment references, multiplicities, and inheritance relations, often using a metamodeling language such as EMF Ecore~\cite{steinberg2008emf}. The right side of \autoref{fig:clevr_example} shows the metamodel of the \clevr{} DSL.  We use \(I \models M\) to denote that an instance model \(I\) conforms to the metamodel \(M\).

A set of semantic well-formedness constraints $\mathcal{C}$ complements the metamodel by specifying semantic rules that cannot be expressed by the grammar or the metamodel alone~\cite{varro2018towards}. 
% Typical examples include referential integrity, type consistency, uniqueness requirements, and transitive properties such as acyclicity. 
These constraints are commonly specified using a constraint language~\cite{richters2002ocl,marussy2020specification}. Formally, a constraint $c \in \mathcal{C}$ is a predicate over an instance model: $c: \mathcal{I}_M \rightarrow \{\mathrm{true}, \mathrm{false}\}$. $\mathcal{I}_M$ represents the set of instance models conforming to $M$ and the output shows whether the constraint is satisfied. \autoref{fig:clevr_example} shows three out of the  17 constraints (single-root structure, type-correct dataflow, and acyclicity) of the CLEVR DSL.

To connect a text sequence to an instance model, language engineers typically define \textit{transformation rules} from concrete syntax constructs, such as selected grammar non-terminals, to metamodel elements. Given such rules, a DSL instance written in the concrete syntax can be transformed into an instance model, for example, through parsing and model transformation techniques~\cite{muller2006model,zhang2024supporting}. The coloring in \autoref{fig:clevr_example} illustrates an example of such a correspondence between the grammar and the metamodel. A DSL instance is valid only if the resulting instance model $I$ is \textit{consistent} with respect to $(M, \mathcal{C})$, i.e., it conforms to the metamodel and satisfies all well-formedness constraints: $I \models (M, \mathcal{C}) \iff I \models M \land \forall c \in \mathcal{C},\; c(I)=\mathrm{true}$.

\begin{figure}[tb]
    \centering
    \includegraphics[width=\linewidth]{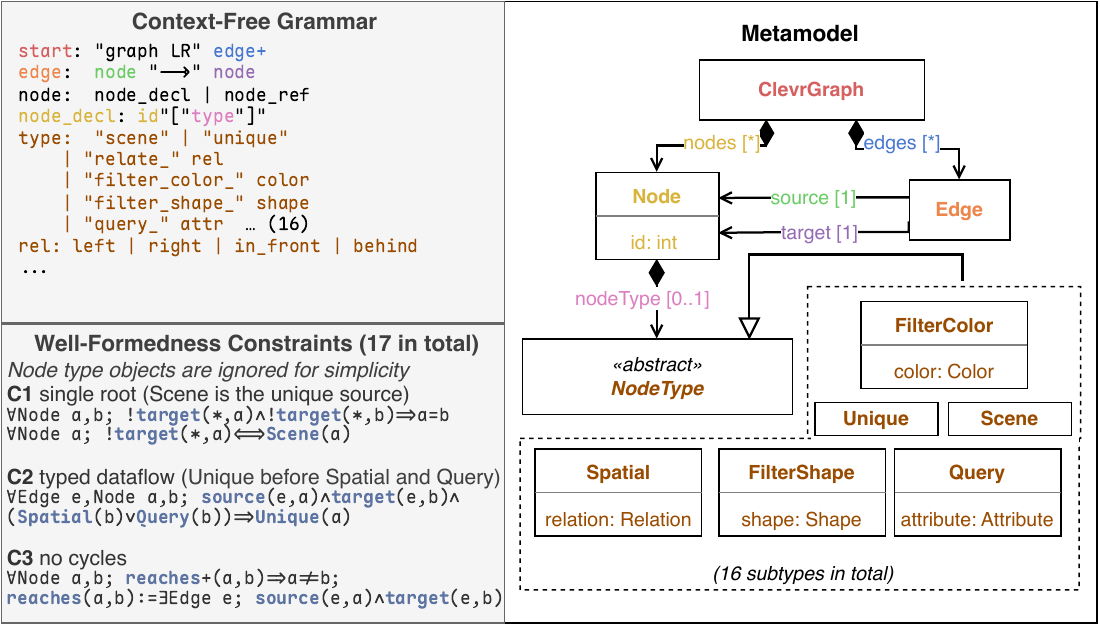}
    \caption{Example DSL for CLEVR programs~\cite{johnson2017clevr}}
    \label{fig:clevr_example}
\end{figure}

\subsubsection{Partial models}
Partial models represent incomplete instances of a metamodel~\cite{famelis2012partial}. Instead of fully specifying all elements, a partial model records what is currently known and leaves the remaining information unspecified~\cite{varro2018towards,semerath2017graph,marussy2020specification}. Following prior work, we view a partial model as an interpretation of metamodel-level \textit{facts}, represented as predicates whose truth values may be known, unknown, or inconsistent. In the \clevr{} DSL, a partial model may already contain facts such as \(\textsc{Node}(n_1)\), \(\textsc{Node}(n_2)\), \(\textsc{Edge}(e_1)\), \(\textsc{source}(e_1,n_1)\), and \(\textsc{target}(e_1,n_2)\), while other facts, such as the node type of \(n_2\) or the remaining outgoing edges of \(n_1\), remain unknown. A complete instance model can be obtained by refining the partial model, i.e., by adding new facts until the objects, references, and attribute values are fully determined. 

\refinery{} is a graph solver for generating and analyzing domain-specific graph models from partial specifications~\cite{marussy2024refinery}. In this paper, we use \refinery{} as the backend solver for reasoning about the partial models constructed during decoding. The design of \approach{} is compatible with any backend that can reason soundly about partial models.

% Its input specification consists of a metamodel, an optional initial partial model, constraints expressed as predicates, and scope information that bounds the generated models. 

\subsubsection{Large language models}
\llm{}s are deep neural networks that typically follow the decoder-only Transformer architecture~\cite{zhao2026survey}. They are trained on massive collections of textual data using the \textit{language modeling task}, which consists of predicting the next token given a sequence of preceding tokens. 
In practice, the generation follows an autoregressive process: an \llm{} receives as input a \textit{prompt} $\seq{x}$ and produces a probability distribution over all tokens in its vocabulary, \(\mathrm{LLM(\seq{x})}=P(\cdot|\seq{x})\). Based on this distribution and a chosen \textit{decoding strategy},
% (e.g., greedy or temperature sampling), 
the next token $t \sim P(\cdot|\seq{x})$ is selected by a \textit{sampler} and appended to the prompt. The updated sequence is fed back into the \llm{}, and the process repeats until a predefined \emph{termination condition} is reached, such as generating an end-of-sequence token, to create an output sequence $\seq{y}$.

\subsubsection{Decoding strategies for LLMs}

%\subsection{Large Language Models}

% \subsubsection{Decoding strategies}

A decoding strategy determines how the next token is selected from the probability distribution \(P(\cdot|\seq{x})\) predicted by an \llm{} from a prompt $\seq{x}$. 
% Existing decoding strategies can be broadly classified into deterministic and stochastic approaches~\cite{shi2024thorough}. 

\noindent
\textbf{Unconstrained decoding}
strategies select tokens from the probability distribution produced by the \llm{} directly, without considering DSL-specific validity constraints. For example, temperature sampling adjusts the sharpness of this distribution by scaling the next-token logits before normalization. Other commonly used strategies include greedy decoding, beam search~\cite{freitag2017beam}, top-$k$ sampling~\cite{fan2018hierarchical}, and top-$p$ sampling~\cite{holtzman2019curious}.

\noindent
\textbf{Constrained decoding.}
In many applications, the generated sequence is required to satisfy constraints beyond those captured by the language model distribution. Constrained decoding refers to decoding methods that restrict token selection using an external constraint specification. Given a set of constraints \(\Phi = \{\phi_1,\ldots,\phi_k\}\), the goal is to generate an output sequence \(\pmb{y}\) such that \(\pmb{y}\) satisfies every constraint in \(\Phi\), while still using the \llm{} probabilities to guide token selection.

When the constraint specification is a CFG \(G\), the constraint is that the final output must belong to the language generated by the grammar, i.e., \(\pmb{y} \in L(G)\). Such methods are commonly referred to as \emph{grammar-constrained decoding}~\cite{beurer2023prompting,llguidance2024,willard2023efficient,ugare2024syncode,dong2025xgrammar,kellner2024guiding}. The core idea is to maintain a parser state induced by the output and, at each decoding step, mask tokens that would make the output impossible to extend into a syntactically valid sequence~\cite{beurer2023prompting,willard2023efficient}. As a result, these approaches enforce concrete-syntax validity during generation, but do not by themselves ensure that the resulting instance model is consistent. 

% to the metamodel or satisfies well-formedness constraints.

% \subsubsection{Grammar-constrained Decoding}

%  However, in many applications it is desirable to enforce additional constraints on the generated text. This leads to the notion of \textit{constrained decoding}, where the token selection process is guided not only by the model probabilities but also by external constraints which aims to enforce some kind of syntactic correctness. More formally, given a set of constraints $\Phi = \{\phi_1,\dots,\phi_C\}$, constrained decoding enforces that the generated output sequence $\pmb{y}\sim \text{LLM}(\pmb{x})$ satisfies every constraint $\phi_i \in \Phi$.

% When $\Phi$ is specified through a formal grammar, such that the generated output $\pmb{y}$ must conform to it, the resulting approaches are referred to as \textit{grammar-constrained decoding}~\cite{beurer2023prompting,llguidance2024,willard2023efficient,ugare2024syncode,dong2025xgrammar}. The core idea behind these methods is to mask invalid tokens at each decoding step, thereby ensuring that the generated sequence $\pmb{y}$ progressively satisfies the constraints imposed by the grammar~\cite{beurer2023prompting,willard2023efficient}.

\noindent
\textbf{Beyond grammar-constrained decoding.}
% Formal grammars are not expressive enough to capture many classes of constraints. For example, they cannot enforce properties such as type consistency or referential integrity. Consequently, although grammar-constrained guarantees syntactic correctness, it does not generally ensure semantic validity of the generated output.
% Since formal grammars are not expressive enough to capture many classes of constraints, 
Several approaches augment grammar-constrained decoding with additional semantic reasoning mechanisms. 
These approaches typically refine the token mask by incorporating complementary information, such as semantic pruners or static analyzers, thereby restricting generation with additional constraints~\cite{nagy2026chopchop,mundler2025type,agrawal2023monitor}. However, they either lack sufficient expressive power to capture complex structured semantic properties~\cite{nagy2026chopchop} or focus on a narrow class of constraints~\cite{agrawal2023monitor,mundler2025type}. For example, ChopChop~\cite{nagy2026chopchop} employs semantic pruners. Given a program space, a semantic pruner returns the subspace of programs that satisfy a particular constraint, such as type consistency. Such pruners cannot fully express all types of constraints. 
% \aab{this is because they are on the program space, and some constraints are i the modeling space, so inexpressible in program space (at least directly). We may want to add this if we have space.}
% For instance, regarding type consistency, they cannot guarantee that all generated outputs are type-correct. 
Other approaches~\cite{mundler2025type,agrawal2023monitor,kapoor2025constrained} focus only on specific language semantic properties rather than general DSL specifications.
% , including type consistency, referential integrity, and API-conformance constraints.

%, which operate over ASTs. \aab{Chopchop does pruning at the PL level, but we lift it to the model level. Percy's Q: are the priuners context-dependent? }\textbf{}

%\percy{I think the discussion on IterGen is important, but currently it is too long}
IterGen~\cite{ugare2025itergen} takes a different approach by building on SynCode~\cite{ugare2024syncode} with an iterative backtracking mechanism that revises previously generated tokens upon constraint violations. It introduces three operations: \texttt{forward} (generate until a syntactic unit is reached), \texttt{backward} (rewind to a previously generated syntactic unit), and \texttt{view} (retrieve the strings associated with a syntactic unit), enabling programmatic inspection and enforcement of semantic constraints during decoding.

However, IterGen exposes intermediate fragments at the concrete-syntax level. In particular, \texttt{view} returns raw strings, so constraint checks must be implemented through string processing or pattern matching. This conflates concrete and abstract syntax, tying constraints to surface details such as delimiters, keywords, and whitespace rather than to the underlying objects, attributes, and references. Users need to re-implement parsing and structural analysis that a proper abstract syntax representation provides directly, making constraints harder to maintain and reason about as the DSL evolves.

\subsubsection{LLMs for DSL Generation}
\llm{}s are increasingly used to automate the creation of software artifacts, many of which are expressed in DSLs~\cite{joel2025survey,liu2024large}. Early work has primarily used unconstrained decoding to study the ability of \llm{}s to generate DSL instances, including software models~\cite{chen2023automated,pan2025llm,ferrari2024model} and executable queries~\cite{lopez2024text2vql,hong2024knowledge}. More recent approaches introduce explicit checking mechanisms, either during decoding through grammar-constrained generation~\cite{park2024grammar,netz2024grammar}, or after generation through post-hoc verification and repair~\cite{chen2025accurate,he2025llm,khamsepour2025impact,lamas2026dslxpert}. 
% Other approaches focus on specific types of languages~\cite{agrawal2023monitor,kapoor2025constrained}. 
These techniques improve the validity of generated DSL instances, but they either focus mainly on concrete syntax or check validity only after a complete artifact is produced.

Existing work typically evaluates the generated artifact with a reference artifact or its expected execution result. We refer to such a measure as an \emph{alignment score} \(s(\seq{y})\), which quantifies how well a generated output \(\seq{y}\) aligns with the intended output. Depending on the DSL, \(s\) can be instantiated as a structural similarity metric, such as an  F1 score over generated elements, or as an execution-based metric, such as result accuracy.

\subsubsection{Positioning} 
% In this paper, we propose a decoding framework that guarantees that the outputs generated by a LLM conform to the concrete syntax and abstract syntax of a domain-specific language while simultaneously satisfying its well-formedness constraints. 

%More specifically, the framework takes as input: (1) a context-free grammar defining the concrete syntax of the language, (2) a meta-model specifying its abstract syntax, (3) a mapping between grammar productions and meta-model elements, \aab{question: is this mapping important for our proposed approach?} and (4) a set of well-formedness constraints defined on top of the meta-model.

\approach{} complements existing \llm{}-based DSL generation approaches by augmenting \llm{} decoding with metamodel-level verification. Unlike unconstrained approaches, which may introduce syntactic or semantic violations during generation, and post-hoc repair approaches, which detect such violations only after a complete artifact has been produced, \approach{} rejects inconsistent structural decisions before they become part of the accepted generation context. Compared with constrained generation approaches, \approach{} is closest to IterGen, but lifts the corresponding checks from concrete-syntax fragments to the metamodel level. As a result, \approach{} performs semantic pruning during decoding while keeping verification independent of a particular textual representation.

\section{Approach}
\label{sec:approach}

\subsubsection{Problem definition}
% Given a prompt \(x\) and a DSL specification \(\mathcal{L}=(G,M,\mathcal{C})\), 
% \approach{} aims to guide the decoding process of an \llm{} to generate a DSL instance \(\seq{y}\) that conforms to the grammar \(G\) and whose induced instance model is consistent with respect to the metamodel \(M\) and constraints \(\mathcal{C}\). Formally, the goal is to generate an output \(\seq{y} \in \mathcal{L}(G)\) such that its induced instance model \(I_{\seq{y}}\) satisfies \(I_{\seq{y}} \models (M,\mathcal{C})\). Moreover, by ensuring these consistency properties, \approach{} also targets to improve the alignment score $s(\seq{y})$ as a side effect, provided that $s(\cdot)$ satisfies certain properties (see \autoref{thm:worst_case}).
Given a natural language prompt \(x\) and a DSL specification \(\mathcal{D}=(G,M,\mathcal{C})\), \approach{} guides the generation process of an \llm{} toward an output \(\seq{y}\) that is valid with respect to both the concrete and abstract definitions. Formally, the goal is to generate a syntactically valid DSL instance \(\seq{y} \in L(G)\) whose induced instance model \(I_{\seq{y}}\) is consistent with the metamodel \(M\) and well-formedness constraints \(\mathcal{C}\): \(I_{\seq{y}} \models (M,\mathcal{C})\). By enforcing these consistency properties during decoding, \approach{} also aims to improve the alignment score \(s(\seq{y})\) as a secondary effect.
% , under the assumptions stated in \autoref{thm:worst_case}.

% As shown in \autoref{fig:architecture}.b, unconstrained decoding may produce  outputs that cannot be parsed at all (\autoref{fig:architecture}b.i).  Grammar-constrained decoding rules out such concrete-syntax errors, but may  still admit outputs whose induced models violate the metamodel  (\autoref{fig:architecture}b.ii) or the well-formedness constraints  (\autoref{fig:architecture}b.iii). The goal of \approach{} is therefore to  enforce all three levels of correctness during decoding: concrete syntax,  metamodel conformance, and well-formedness constraints (\autoref{fig:architecture}b.iv).

\begin{figure*}[tb]
    \centering
    \includegraphics[width=0.9\linewidth]{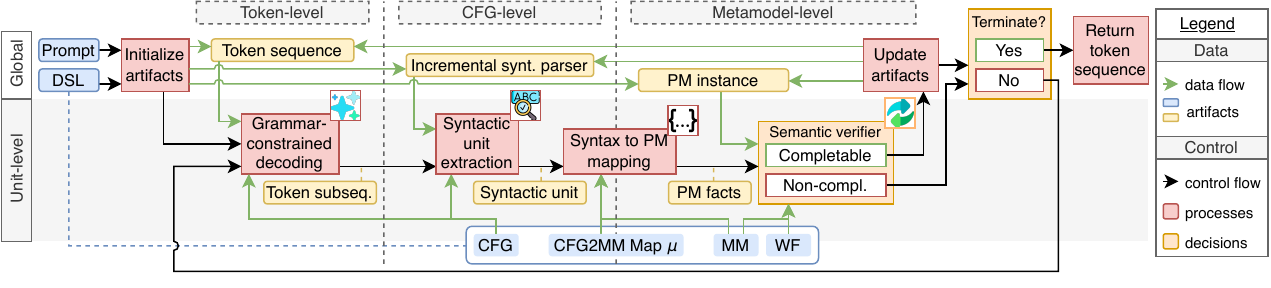}
    \caption{Overview of \approach{} decoding process; MM: metamodel, WF: well-formedness constraints; PM: partial model
    % \percy{Actually add another figure for mapping? Merging figure iii and iv, then we will have some space for the mapping}
    }
    \label{fig:architecture}
\end{figure*}

\begin{figure}[tb]
    \centering
    \includegraphics[width=0.95\linewidth]{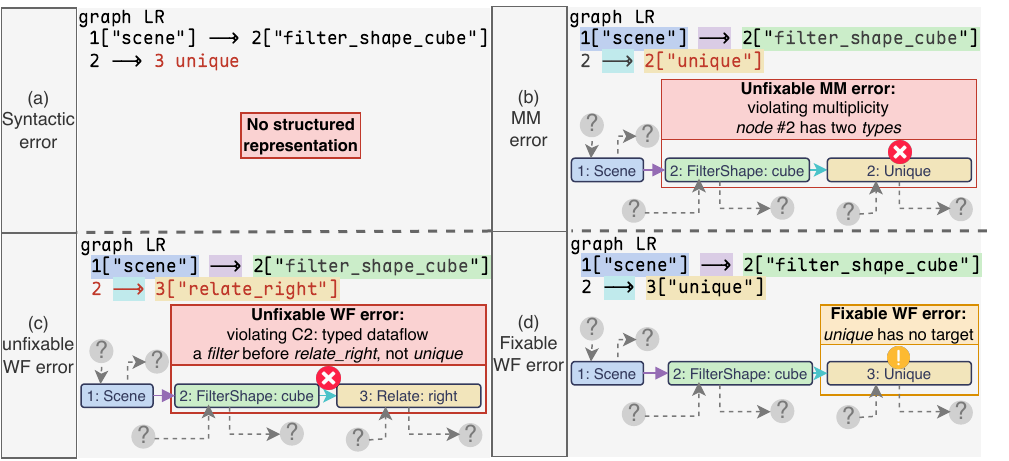}
    \caption{Different partial models, their verification results (boxed), and color-coded correspondence with syntactic units }
    \label{fig:approach_example}
\end{figure}
%%%%%%%%%%%%%%%%%%%%%%%%%%%%%%%%%%%%%%%%%%%%%%%%%%%
\subsubsection{Overview}
\autoref{fig:architecture} summarizes the \approach{} framework. Decoding begins by initializing the artifacts maintained by \approach{}: an empty generated token sequence, a syntactic parser in its initial grammar state, and an empty partial model.
\approach{} maintains these artifacts during generation.
% \approach{} maintains the generated token sequence, an incremental syntactic parser, and the partial model induced by the accepted prefix. 
During decoding, a grammar-constrained decoding approach ensures token-level conformance to the concrete syntax. The generated tokens are then appended to the sequence and consumed by the \emph{syntactic unit extraction} module. Whenever a completed syntactic unit is generated, the \textit{syntax to partial model mapping} maps this unit to metamodel-level facts over \(M\). These facts are tentatively added to a partial model, and a backend \textit{semantic verifier} checks whether the partial model remains completable with respect to \((M,\mathcal{C})\). The artifacts are updated with the unit only if this check succeeds; otherwise, it is rejected and the decoding retries. This process continues until the termination condition is reached.

\subsubsection{Grammar-constrained decoding}
At each decoding step, \approach{} uses a \emph{grammar-guided decoding} module, implemented with \textsc{LLGuidance}~\cite{llguidance2024}, to ensure that the generated sequence remains syntactically valid by assigning zero probability to tokens leading to syntactic violations. 
Such an example for the \clevr{} DSL is shown in \autoref{fig:approach_example}.
% % For example, in the \clevr{} DSL, a token that would lead to the syntactically invalid sequence shown in \autoref{fig:approach_example}.a is assigned zero probability and therefore cannot be generated.

%%%%%%%%%%%%%%%%%%%%%%%%%%%%%%%%%%%%%%%%%%%%%%%%%%%
\subsubsection{Syntactic unit extraction}
\label{subsec:syntactic}

% \approach{} first ensures the output sequence is syntactically valid. At each decoding step, a syntactic parser for grammar-guided decoding, implemented using \textsc{LLGuidance}~\cite{llguidance2024}, maintains the parser state induced by the current sequence and masks tokens that would make the sequence syntactically invalid.
As tokens are generated, the \emph{syntactic unit extraction} module uses an incremental parser to detect when the output completes a selected syntactic unit. A syntactic unit is a completed derivation of a configured grammar non-terminal whose text encodes information relevant to the metamodel. These non-terminals are DSL-specific and are configured by the language engineer. Whenever such a unit is completed, the parser extracts the corresponding grammar fragment and passes it to the mapping module. For example, the highlighted text fragments in \autoref{fig:approach_example}.b, c, and d correspond to information such as node and edge declarations.

%%%%%%%%%%%%%%%%%%%%%%%%%%%%%%%%%%%%%%%%%%%%%%%%%%%

\subsubsection{Mapping syntax to partial models}
\label{subsec:mapping}

\approach{} aims to verify generation at the metamodel level rather than directly over the concrete syntax. To bridge these two levels, \approach{} relies on a mapping \(\mu\) from completed syntactic units to metamodel-level facts. In this paper, we focus on mapping units that directly correspond to either an instance of a metamodel class or a reference between metamodel classes.

In \autoref{fig:approach_example}, the highlight colors indicate which metamodel elements each syntactic unit induces facts for. Given a completed syntactic unit \(u\), \(\mu(u)\) returns the set of facts induced by  \(u\). These facts may introduce objects, assign attribute values, or create references between objects. For example, as shown in \autoref{fig:approach_example}.b, a node declaration \texttt{1["scene"]} is mapped to facts $\textsc{Node}(n_1)$, $\textsc{Scene}(t_1)$, $\textsc{nodeType}(n_1,t_1)$ and $\textsc{id}(n_1,1)$. Similarly, \texttt{2["filter\_shape\_cube"]} introduces facts on a filter and its shape parameter: $\textsc{Node}(n_2)$, $\textsc{FilterShape}(t_2)$, $\textsc{nodeType}(n_2,t_2)$, $\textsc{shape}(t_2,\textsc{Cube})$, and $\textsc{id}(n_2,2)$. Edge declarations induce reference facts. For instance, the completed edge  \texttt{1["scene"]} $\rightarrow$ \texttt{2["filter\_shape\_cube"]} maps to an edge object whose source is the node with ID \(1\) and whose target is the node with ID \(2\): $\textsc{Edge}(e_1)$, $\textsc{source}(e_1,n_1)$, $\textsc{target}(e_1,n_2)$.

During decoding, \approach{} accumulates the facts produced by \(\mu\) into a partial model. The partial model aggregates metamodel-level facts while leaving unspecified elements open for future refinement.
For example, \autoref{fig:approach_example}.b-d visualize the structured representation of the partial model as a graph after successive facts have been aggregated, with grey question marks denoting unknown elements and connections.
% . If a syntactic unit refers to an object that has already been introduced, \(\mu\) reuses the corresponding partial-model object. Otherwise,
% the treatment depends on the DSL specification. 

% For DSLs that allow forward references, \(\mu\) may introduce a partial object whose remaining facts are supplied later. If referential integrity is enforced immediately, a syntactic unit whose reference cannot be resolved is rejected by the verifier.
% yields an incompletable candidate partial model and is rejected by the verifier.

The mapping \(\mu\) can be obtained in different ways depending on the language engineering framework. In \texttt{Xtext}~\cite{xtext}, the relationship between grammar rules and Ecore classes is made explicit through grammar actions and assignments, which makes it possible to derive the mapping automatically. In other settings, \(\mu\) is defined, for example, using a  \texttt{Lark} transformer~\cite{lark_transformer}, or derived from framework-specific annotations.  \approach{} is independent of how \(\mu\) is obtained and supports both manually specified and automatically generated mappings. Once facts are added to the partial model, verification is performed over metamodel facts rather than over surface tokens, enabling \approach{} to enforce structured semantic constraints such as multiplicities and acyclicity.

%%%%%%%%%%%%%%%%%%%%%%%%%%%%%%%%%%%%%%%%%%%%%%%%%%%

\subsubsection{Semantic verification and retry}
After metamodel-level facts are created, \approach{}  verifies the resulting candidate at the level of the partial model. Let \(P\)  denote the partial model induced by the accepted prefix, and let \(\mu(u)\) denote the facts induced by a newly completed syntactic unit \(u\). Adding  \(u\) yields a new candidate partial model \(P' = P \cup \mu(u)\). 

% Verification checks whether \(P'\) can still lead to a consistent instance model.

% still be refined into a complete instance model consistent with the metamodel and semantic well-formedness constraints. 
The verification determines whether facts in the partial model $P'$, either individually or in combination, are inconsistent with the metamodel and well-formedness constraints. Since some violations may still be fixed by subsequent tokens, the verifier focuses on identifying facts that make the partial model impossible to complete into a valid instance.

We formalize this check using the notion of refinement. A refinement operation adds information to a partial model by resolving previously unknown facts, such as introducing missing objects, assigning attribute values, or creating references. We write \(P \sqsubseteq I\) when a complete instance model \(I\) refines a partial model \(P\), i.e., when \(I\) preserves all facts already fixed in \(P\) and resolves the remaining unknowns. A partial model \(P\) is \emph{completable} with respect to \((M,\mathcal{C})\) if there exists at least one complete consistent refinement: $\exists I \: P \sqsubseteq I \land I \models (M,C)$. Accordingly, a violation is \emph{unfixable} if the newly added facts make the candidate partial model not completable. In that case, no continuation of the current prefix can produce a consistent instance without revising previously accepted decisions.

% \percy{briefly mention how we backtrack in the LLM space}

\approach{} uses \refinery{} as a \textit{sound} backend verifier ~\cite{marussy2024refinery} to determine whether the current partial model is completable. If the candidate partial model  \(P'\) is completable, \approach{} commits the syntactic unit \(u\), updates the accepted partial model to \(P'\), and stores a snapshot of the current decoding state. This snapshot includes the accepted tokens, the parser state, and the partial model, so that decoding can later resume from this point if needed.

If \(P'\) is not completable, \approach{} retries with a simple strategy. The decoder rejects both the unit from the token sequence and its induced facts from the partial model. 
Decoding continues from the most recent completable snapshot using temperature sampling. Since the sampling is stochastic, the regenerated sequence may create a different syntactic unit. This retry strategy is intentionally lightweight, allowing us to focus mainly on the effect of metamodel-level verification.

\autoref{fig:approach_example}.b and c show two partial models rejected by the semantic verifier because they are no longer completable. The two cases violate multiplicity and typed dataflow constraints, respectively, and these violations are unfixable. In contrast, the partial model in \autoref{fig:approach_example}.d remains acceptable. While it is incomplete, given that \texttt{Unique} does not yet have a target node, this incompleteness can be fixed by extending the output with more tokens. This verification step ensures that unfixable violations are rejected before they are committed, guiding generation toward a valid DSL instance shown in \autoref{lst:clevr-query-color}.

\subsubsection{Termination conditions}

% The decoding process terminates when the generated output is complete or when some configured budgets are exhausted.
The decoding process terminates when either the LLM produces an end-of-sequence token or the parser state indicates 
% a complete DSL instance is generated, i.e.,
no further tokens are required to complete the DSL instance. 
As a safeguard against non-termination, decoding also stops once the output reaches a maximum number of tokens. Finally, if semantic verification repeatedly rejects candidates and the retry budget is exhausted, \approach{} falls back to using only the grammar-guided decoding module from the most recent completable snapshot. This fallback ensures that decoding can still terminate with a syntactically valid output when the lightweight retry strategy cannot find a completable continuation within the budget.

\subsubsection{Worst-case alignment score guarantee}
The semantic verification rejects only unfixable violations. 
% That is, a completed syntactic unit is rejected only when adding its induced facts makes the partial model incompletable. 
Thus, if the grammar-constrained decoding module eventually produces a consistent instance, \approach{} follows the same generation path and performs no rejection. If a rejection occurs, then the corresponding grammar-based continuation is not completable without revising previously generated tokens. This property yields the following worst-case guarantee.

\begin{theorem}[Worst-case alignment score of \approach{}]
\label{thm:worst_case}
Consider grammar-guided decoding and \approach{} operating under identical decoding conditions, i.e., using the same prompt, \llm{}, sampler, and random seed. Let \(\mathcal{D}=(G,M,\mathcal{C})\) be the DSL specification, and let \(s\) be an alignment scorer that maps each generated output \(\seq{y}\) to a numerical score. Assume that \(s\) satisfies the following properties: \emph{(non-negativity)} $s(\seq{y}) \ge 0$; and \emph{ (precondition)} $s(\seq{y}) = 0$ whenever the instance model induced from $\seq{y}$ is inconsistent with $(M, \mathcal{C})$.

% \begin{itemize}[leftmargin=10pt]
%   \item \emph{(non-negativity)} $s(\seq{y}) \ge 0$; and
%   \item \emph{(precondition)} $s(\seq{y}) = 0$ whenever $\seq{p}$ violates $(M, \mathcal{C})$
%   %;
%    %     equivalently, $\{\, y : s(y) > 0 \,\} \subseteq \{\, y : y \models (M, C) \,\}$.
% \end{itemize}
% \aab{took me a while to check this. maybe add some intuition here?}
Assume that the backend verifier used by \approach{} is sound, i.e., it rejects a completed syntactic unit only when adding the corresponding facts makes the partial model not completable with respect to \((M,\mathcal{C})\). Let \(\seq{y}_{\text{g}}\) and \(\seq{y}_{\text{memo}}\) denote the outputs generated by a grammar-constrained decoding approach and by \approach{}, respectively. Then, for every input, under the previously stated assumptions,
% and any retry strategy,
\(\seq{y}_{\text{memo}}\) achieves an equal or better alignment score than \(\seq{y}_{\text{g}}\) by ensuring semantic consistency on top of syntactic validity.
Formally, 
\(
  s(\seq{y}_{\text{memo}}) \ge s(\seq{y}_{\text{g}}),
\)
with equality whenever no rejection fires.

\end{theorem}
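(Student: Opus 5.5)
We argue by a case split on whether any rejection fires during \approach{}'s run. The split is organized around the coupling induced by \emph{identical decoding conditions}: same prompt, \llm{}, sampler and seed. Under this coupling, as long as \approach{} accepts every syntactic unit, it produces exactly the same token stream as grammar-guided decoding. Both methods share the \textsc{LLGuidance} mask, and acceptance only commits the unit and snapshots the state without altering the distribution or the random stream.

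\textbf{Step 1 (coupling lemma).} By induction on the number of decoding steps, we show that if no rejection has occurred up to step \(t\), then the prefixes, parser states and sampler states of the two runs coincide at step \(t\). Consequently, if no rejection fires at all, the two runs also meet the same termination condition, and \(\seq{y}_{\text{memo}}=\seq{y}_{\text{g}}\). This gives \(s(\seq{y}_{\text{memo}})=s(\seq{y}_{\text{g}})\), which is the claimed equality case.

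\textbf{Step 2 (a rejection forces \(s(\seq{y}_{\text{g}})=0\)).} Suppose a rejection fires. Let the first one occur when unit \(u\) completes after the accepted prefix with partial model \(P\). By Step 1, the grammar-guided run has produced exactly the same prefix including \(u\), so its accumulated facts contain \(P'=P\cup\mu(u)\). By soundness of the verifier, \(P'\) is not completable: there is no \(I\) with \(P'\sqsubseteq I\) and \(I\models(M,\mathcal{C})\).

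We then argue that the instance model \(I_{\seq{y}_{\text{g}}}\) induced by the final output refines \(P'\). This needs a monotonicity property of the mapping: facts produced by \(\mu\) for units in a prefix are preserved in the model induced by any completion of that prefix, since later units only add facts. Given this, \(I_{\seq{y}_{\text{g}}}\) cannot be consistent. The precondition assumption then yields \(s(\seq{y}_{\text{g}})=0\), and non-negativity gives \(s(\seq{y}_{\text{memo}})\ge 0=s(\seq{y}_{\text{g}})\), whatever \approach{} outputs after retries or after falling back to grammar-only decoding.

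\textbf{Main obstacle.} The delicate point is Step 2's claim that the final grammar-guided model refines \(P'\). This has two parts. First, we need the monotonicity of \(\mu\) and of model induction under extension, which the paper uses implicitly; we would state it as an explicit assumption, namely that the fact set \(\bigcup_i \mu(u_i)\) over the prefix's units is contained in the induced model's facts. Second, the grammar-guided output may terminate early, for example at the token limit, and then fail to induce a full instance at all. We would treat such an output as inconsistent, so that \(s=0\) by the precondition, and the inequality still holds. A secondary subtlety is that the coupling in Step 1 requires rejection-free acceptance not to consume randomness; we would make this explicit as part of ``identical decoding conditions.''
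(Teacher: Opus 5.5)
Your proof is correct and is essentially the paper's argument with the case split inverted. The paper splits on whether $\seq{y}_{\text{g}}$ is consistent and shows by contradiction that a consistent $\seq{y}_{\text{g}}$ admits no rejection, which is exactly the contrapositive of your Step 2, and its assertion that $\seq{y}_{\text{g}}\neq\seq{y}_{\text{memo}}$ forces a rejection is your Step 1 coupling used implicitly; your version adds value by stating the coupling, the monotonicity of $\mu$, and the treatment of truncated outputs explicitly, and by deriving the ``equality whenever no rejection fires'' clause directly.
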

% \aab{what does the value of s(p) mean? for instance the severity of violation like closer to 0 means closer to being violated? We might need to formally define this, and also define what a unfixable violation is.}
%\percy{semantic score is not defined, maybe introduce in the background as part of LLM for DSL generation}
% \begin{adjustwidth}{-2em}{0pt}
% \begin{IEEEproof}
\noindent
\textit{Proof.}
There are two possible cases: (1) $\seq{y}_{\text{g}}$ does not violate $(M,\mathcal{C})$. Since the verification is sound, it follows that $\seq{y}_{\text{g}} = \seq{y}_{\text{memo}}$. Assume, for the sake of contradiction, that $\seq{y}_{\text{g}} \neq \seq{y}_{\text{memo}}$. Then, during the execution of \approach{}, at least one element must have been rejected. This implies that the corresponding partial solution could not be extended into a valid output satisfying $(M,\mathcal{C})$. Since $\seq{y}_{\text{g}}$ itself satisfies $(M,\mathcal{C})$, such a rejection cannot occur, yielding a contradiction; thus, $\seq{y}_{\text{g}} = \seq{y}_{\text{memo}}$. (2) $\seq{y}_{\text{g}}$ violates $(M,\mathcal{C})$. By the precondition and the non-negativity property of s,
    \(
        s(\seq{y}_{\text{g}}) = 0 \leq s(\seq{y}_{\text{memo}}).
    \)

% \begin{enumerate}[leftmargin=30pt]
%     \item

%     \item 
% \end{enumerate}
% \end{IEEEproof}
% \end{adjustwidth}

% \subsection{Implementation details}
% In the current version of \approach{}, we instantiate the components in \autoref{alg:decode} as follows. The semantic verifier $\mathcal{V}$ is realized using \texttt{Refinery}~\cite{marussy2024refinery}, which determines whether the current partial model can still be completed into an instance satisfying $(M,C)$. Each semantic unit is translated into its \texttt{Refinery} representation and added to the partial model before verification. We implement a simple retry strategy $\mathcal{R}$ which discards the rejected unit and resumes decoding from the immediately preceding completable snapshot using temperature sampling. Because sampling is stochastic, the regenerated continuation typically differs from the rejected one. The stop condition $\mathcal{C}$ terminates decoding when the grammar signals a complete program, the output reaches the token limit, or the retry budget is exhausted.

\section{Evaluation}
\label{sec:evaluation}

We evaluate the impact of \approach{} on syntactic validity, consistency, token efficiency, alignment, and runtime. Specifically, we investigate the following research questions:

\begin{itemize}[leftmargin=10pt]
    \item \textbf{RQ1}: How does \approach{} influence the syntactic validity and consistency of generated DSL instances with respect to the metamodel and well-formedness constraints?
    \item \textbf{RQ2}: How token-efficient is \approach{} in producing consistent instances compared with grammar-guided and unconstrained decoding under post-hoc verification?
    \item \textbf{RQ3}: How does \approach{} affect the alignment scores of the generated instances?
    \item \textbf{RQ4}: What runtime overhead does \approach{} introduce compared with other decoding components?
\end{itemize}

\subsection{Setup}
% We introduce the experimental setup used including the benchmarks, \llm{}s, baseline approaches, and metrics. 

\paragraph{Benchmarks}

The experiments are conducted on four popular DSL generation benchmarks grouped into two categories, as summarized in \autoref{tab:benchmarks}. The first category consists of \emph{non-executable DSLs}, which represent structured artifacts such as graph models without execution semantics. In SE, these DSLs are commonly used to describe system architectures, configurations, and taxonomies.

\begin{table}[tb]
\centering
\caption{Summary of evaluation benchmarks.}
\label{tab:benchmarks}
\scalebox{0.85}{
\begin{tabular}{@{}cccp{0.5\linewidth}@{}}
\toprule
Benchmark & Samples & Constraints & Example well-formedness constraints \\
\midrule
\multicolumn{4}{@{}l}{\emph{Non-executable (structural)}} \\
Taxonomy & 100 & 6 & one root; acyclic; $\le\!1$ parent/term\\
Flowchart & 326 & 5 & reachability; decisions have $\ge\!2$ guarded out-edges; no self-loops \\
\addlinespace[2pt]
\multicolumn{4}{@{}l}{\emph{Executable (operational)}} \\
CLEVR & 300 & 17 & one source and one sink; acyclic; type consistency; referential integrity \\
Block & 399 & 11 & value type matches property, e.g., height$\to$integer; type consistency \\
\bottomrule
\end{tabular}
}
\end{table}

\noindent
\textbf{Taxonomy}. We use the WordNet taxonomy construction benchmark~\cite{miller1995wordnet}, which constructs a hypernym graph from a given set of concepts and has been widely used to evaluate LLM-based taxonomy generation~\cite{chen2023prompting,zeng2024chain}. Following prior work, we randomly sample 100 taxonomies, each containing between 11 and 50 terms.

\noindent
\textbf{Flowchart}. We use the PAGED benchmark~\cite{du2024paged}, containing procedure graphs from natural language descriptions. We retain only samples that can be represented as flowcharts and exclude trivial graphs without branches or forks~\cite{chen2025accurate}. 
% The resulting benchmark contains 326 samples.

The second category consists of \emph{executable DSLs}, where each generated program can be executed to produce an output. In SE, this type of DSLs is used to represent executable queries or data-processing pipeline configurations.

\noindent
\textbf{CLEVR}. The CLEVR benchmark ~\cite{johnson2017clevr} generates executable programs over the CLEVR DSL introduced in \autoref{fig:clevr_example} from natural language questions. We use a subset evenly divided among query, count, and judge questions~\cite{chen2025accurate}.

\noindent
\textbf{Block}. We use the test split of the Overnight Block dataset~\cite{wang2015building}, where natural language descriptions of colored blocks are translated into logical forms expressed in a closed lambda function with 29 prefix operators. Each program executes over a simulated block world to produce an end state. 
% The benchmark has also been adopted for LLM-based DSL generation~\cite{wang2023grammar}.
% The Clevr benchmark~\cite{johnson2017clevr} is originally introduced for benchmarking visual question answering tasks but recently also used for LLM-based generation tasks~\cite{chen2025accurate}.

\paragraph{Baselines}
We compare \approach{} with two baseline decoding approaches. To control potential variability, we use the same \llm{}, prompt, sampler, and random seed, differing only in the constraints enforced during decoding. 
% Consequently, any difference in the generated outputs should be be attributed to the applied constraints.

\noindent
\textbf{Unconstrained decoding} (\naive{})  performs standard autoregressive sampling without any constraints, measuring how often the \llm{} produces valid DSL instances by itself. 

\noindent
\textbf{Grammar-constrained decoding} (\syntax{}) masks, at each decoding step, tokens that would violate the DSL context-free grammar. This ensures that the generated output is syntactically valid with respect to the concrete syntax, following prior grammar-constrained decoding approaches~\cite{beurer2023prompting,llguidance2024,willard2023efficient,ugare2024syncode,dong2025xgrammar}.

% while not enforcing metamodel or well-formedness constraints~\cite{beurer2023prompting,willard2023efficient}. This corresponds to the grammar-constrained decoding paradigm adopted by prior work~\cite{beurer2023prompting,llguidance2024,willard2023efficient,ugare2024syncode,dong2025xgrammar}. 

% \noindent
% \textbf{\approach{}} further augments syntax-guided decoding with metamodel-based verification paired with a simple retry mechanism. Whenever a semantic unit is completed, it validates the resulting partial model against $(M, C)$. If a non-repairable violation is detected, decoding backtracks and retry using the temperatured sampling.

\paragraph{LLMs}
We evaluate six open-weight, instruction-tuned \llm{}s from three families: Qwen3 (0.6B, 8B, and 32B parameters)~\cite{yang2025qwen3}, Gemma-4 (E2B and 26B)~\cite{gemma4}, and Llama3.3-70B~\cite{llama33}. These \llm{}s span different capacities, allowing us to evaluate whether the effects of \approach{} are limited to smaller \llm{}s that frequently generate inconsistent outputs or persist for stronger \llm{}s that already exhibit high capability. We additionally report \texttt{GPT-5.5}~\cite{gpt55} under unconstrained decoding as a state-of-the-art reference.

% The constrained configurations are evaluated only on open-weight \llm{}s because \approach{} intervenes directly in the decoding process. It requires access to next-token logits, token-level masking, and the ability to discard generated tokens when a semantic unit is rejected. Current proprietary APIs expose only sampled outputs. Although the \texttt{GPT-5.5} API supports constrained decoding, it is limited to JSON schema-based constraints, which are insufficient to implement either grammar-constrained decoding or \approach{}. As a result, \texttt{GPT-5.5} is evaluated only under unconstrained decoding. 
% \approach{} is \llm{}-agnostic. It interacts with the underlying \llm{} only through the standard decoding interface of next-token logits and token masking. Consequently, any \llm{} exposing this interface, including proprietary \llm{}s if such access becomes available, can be used without modification.

Although \approach{} is \llm{}-agnostic, we evaluate it only on open-weight \llm{}s because it requires direct access to the decoding process, including next-token logits, token-level masking, and the ability to reject generated tokens. Current proprietary APIs typically expose only sampled outputs or limited constrained decoding interfaces. For example, while the \texttt{GPT-5.5} API supports constrained decoding, its constraints are limited to JSON schemas and therefore cannot express either grammar-constrained decoding or the partial-model verification performed by \approach{}. Consequently, \texttt{GPT-5.5} is evaluated only under unconstrained decoding. 

\paragraph{Metrics}
For RQ1, we report two metrics that measure the correctness of the generated DSL instances. \textbf{Syntactic validity} (\emph{Syn}) is the fraction of outputs that parse according to the DSL's concrete syntax. \textbf{Consistency} (\emph{Con}) is the fraction of outputs that are additionally consistent with the metamodel and its well-formedness constraints $(M, \mathcal{C})$. Note that every consistent instance must also be syntactically valid.

For RQ3, we evaluate how closely the generated instance aligns with the ground truth.  For the structural DSLs (Taxonomy and Flowchart), we compute an element-wise \textbf{Soft F1}~\cite{franti2023soft} by matching nodes and relations between the generated and ground truth instances, regardless of validity. Since it assigns partial credit to inconsistent outputs, it does not satisfy the \textit{precondition} assumptions of \autoref{thm:worst_case} and is reported only as a reference metric. To check the theorem empirically on structural DSLs, we also report \textbf{Hard-0 F1$^*$}, which assigns a score of $0$ to inconsistent instances and otherwise equals the corresponding Soft F1 score, forcing the \textit{precondition} property.

For the executable DSLs, we report \textbf{execution accuracy}: \emph{denotation accuracy} (Block), measuring whether the generated program evaluates to the intended state, and \emph{answer accuracy} (CLEVR), measuring whether executing the program produces the correct answer. Execution accuracy satisfies both \textit{non-negativity} and \textit{precondition} properties of \autoref{thm:worst_case} naturally. 

For the efficiency metrics (RQ2 and RQ4), we compute the average number of generated tokens per instance (RQ2) and the average runtime per sample, broken down by the different components of \approach{} (RQ4).

\paragraph{Experiment configurations}
Unless otherwise stated, each experiment is repeated with three random seeds, and we report the average performance. A sampling temperature of 0.7 is used to enable effective retry. All experiments are performed with few-shot prompting. To ensure a consistent evaluation across \llm{}s, we disable the reasoning (thinking) mode and evaluate only the generation capability. We set the maximum number of tokens to be generated to 5096, and the maximum retry attempts to 50. All experiments with local \llm{}s are conducted with a single NVIDIA RTX Pro 6000 GPU, an Intel Xeon Platinum 8470Q CPU, and 110 GB of system memory. 
% Additional experimental implementation and configuration are provided in the accompanying artifact~\cite{memodecode}.

\subsection{RQ1: Impact on consistency}
\noindent
\textbf{Rationale and setup.}
In RQ1, we evaluate the effect of decoding-time verification during generation. Specifically, we investigate how \approach{} affects syntactic validity and consistency with respect to $(M, \mathcal{C})$ compared with the \naive{} and \syntax{} baselines, and whether these improvements generalize across different \llm{} families and \llm{} scales.

\begin{table*}[tb]
\centering
\caption{Syntactic validity (\textbf{Syn}) and consistency (\textbf{Con}), in \%, averaged over three runs, across the four DSL benchmarks. The best consistency per benchmark and \llm{} is in \textbf{bold}. GPT-5.5 (unconstrained only) is shown as a reference.}
\label{tab:rq1-accuracy}
\setlength{\tabcolsep}{3pt}
\resizebox{0.95\textwidth}{!}{%
\begin{tabular}{lcccccccccccccccccccccccc}
\toprule
\llm{} & \multicolumn{6}{c}{Taxonomy} & \multicolumn{6}{c}{Flowchart} & \multicolumn{6}{c}{Block} & \multicolumn{6}{c}{CLEVR} \\
\cmidrule(lr){2-7} \cmidrule(lr){8-13} \cmidrule(lr){14-19} \cmidrule(lr){20-25}
 & \multicolumn{2}{c}{Unc} & \multicolumn{2}{c}{Grammar} & \multicolumn{2}{c}{\approach{}} & \multicolumn{2}{c}{Unc} & \multicolumn{2}{c}{Grammar} & \multicolumn{2}{c}{\approach{}} & \multicolumn{2}{c}{Unc} & \multicolumn{2}{c}{Grammar} & \multicolumn{2}{c}{\approach{}} & \multicolumn{2}{c}{Unc} & \multicolumn{2}{c}{Grammar} & \multicolumn{2}{c}{\approach{}} \\
\cmidrule(lr){2-3} \cmidrule(lr){4-5} \cmidrule(lr){6-7} \cmidrule(lr){8-9} \cmidrule(lr){10-11} \cmidrule(lr){12-13} \cmidrule(lr){14-15} \cmidrule(lr){16-17} \cmidrule(lr){18-19} \cmidrule(lr){20-21} \cmidrule(lr){22-23} \cmidrule(lr){24-25}
 & Syn & Con & Syn & Con & Syn & Con & Syn & Con & Syn & Con & Syn & Con & Syn & Con & Syn & Con & Syn & Con & Syn & Con & Syn & Con & Syn & Con \\
\midrule
Qwen3-0.6B & 99.7 & 36.0 & 100.0 & 36.0 & 100.0 & \textbf{66.7} & 78.5 & 71.9 & 99.6 & 89.1 & 99.3 & \textbf{98.6} & 51.4 & 49.0 & 100.0 & 93.7 & 100.0 & \textbf{98.8} & 44.3 & 5.3 & 99.9 & 5.6 & 99.9 & \textbf{31.2} \\
Qwen3-8B & 99.3 & 69.0 & 99.3 & 69.0 & 99.3 & \textbf{78.3} & 93.8 & 90.4 & 100.0 & 96.1 & 100.0 & \textbf{96.4} & 75.0 & 74.0 & 100.0 & 94.9 & 99.9 & \textbf{95.7} & 90.8 & 57.1 & 99.9 & 59.8 & 100.0 & \textbf{65.6} \\
Qwen3-32B & 99.7 & 79.3 & 99.7 & 79.3 & 100.0 & \textbf{88.7} & 87.7 & 85.4 & 100.0 & \textbf{97.2} & 100.0 & \textbf{97.2} & 76.8 & 75.6 & 100.0 & 92.6 & 100.0 & \textbf{96.8} & 93.4 & 62.0 & 100.0 & 60.3 & 100.0 & \textbf{67.9} \\
Gemma4-E2B & 64.0 & 22.7 & 92.7 & 29.3 & 90.7 & \textbf{44.3} & 0.0 & 0.0 & 99.7 & 89.0 & 99.7 & \textbf{89.6} & 60.4 & 55.6 & 100.0 & 83.5 & 100.0 & \textbf{88.6} & 68.4 & 19.9 & 99.2 & 20.1 & 99.7 & \textbf{30.8} \\
Gemma4-26B & 98.7 & 71.0 & 99.3 & 70.7 & 99.7 & \textbf{78.3} & 1.1 & 1.1 & 100.0 & 98.0 & 100.0 & \textbf{98.3} & 80.0 & 76.9 & 99.4 & 92.2 & 99.2 & \textbf{96.1} & 97.3 & 61.1 & 99.9 & 61.9 & 99.9 & \textbf{69.6} \\
Llama3.3-70B & 99.3 & 82.3 & 100.0 & 82.0 & 100.0 & \textbf{91.3} & 98.9 & 96.2 & 100.0 & 97.2 & 100.0 & \textbf{97.6} & 78.9 & 76.9 & 100.0 & 94.9 & 100.0 & \textbf{99.4} & 93.1 & 64.8 & 99.8 & 58.2 & 99.9 & \textbf{77.0} \\
\midrule
GPT-5.5 & 100.0 & 98.7 & -- & -- & -- & -- & 100.0 & 100.0 & -- & -- & -- & -- & 99.2 & 95.5 & -- & -- & -- & -- & 100.0 & 82.9 & -- & -- & -- & -- \\
\bottomrule
\end{tabular}}
\end{table*}

\noindent
\textbf{Results.}
\autoref{tab:rq1-accuracy} reports the syntactic validity and consistency of the generated instances.  
\syntax{} achieves near-perfect syntactic validity in almost all cases, and \approach{} largely preserves this property. The largest reduction is only 2\% (92.7\% vs.\ 90.7\% for \texttt{Gemma4-E2B} on Taxonomy), while the remaining differences are within 0.5\%. In contrast, \naive{} decoding is substantially less reliable. For example, \texttt{Gemma4-E2B} and \texttt{Gemma4-26B} output almost no syntactically valid Flowcharts (0.0\% and 1.1\%), showing that smaller \llm{}s cannot follow the syntactic structure on their own.

% Manual inspection shows that most syntactic failures arise when the \llm{} exhausts the token budget due to repetition, causing unparsable outputs.

% Manual inspection shows that most syntactic failures arise when the \llm{} repeatedly generates tokens until the token budget is exhausted, producing truncated and therefore unparseable outputs.

\approach{} achieves the highest consistency for every \llm{} and benchmark combination, outperforming \syntax{} throughout. The only exception is Flowchart with \texttt{Qwen3-32B}, where both approaches achieve 97.2\%. The improvements are often substantial, especially for smaller \llm{}s. For example, \approach{} increases the consistency of \texttt{Qwen3-0.6B} from 36.0\% to 66.7\% on Taxonomy (+30.7\%) and from 5.6\% to 31.2\% on CLEVR (+25.6\%). Grammar constraints alone are therefore insufficient and can even sometimes reduce consistency. On CLEVR, \syntax{} decreases the consistency of \texttt{Llama3.3-70B} from 64.8\% under \naive{} decoding to 58.2\%, whereas \approach{} increases it to 77.0\%, demonstrating the importance of enforcing metamodel and well-formedness constraints in addition to the syntax.

Although the largest improvements are observed for smaller \llm{}s, they persist for larger ones. Compared with \syntax{}, \approach{} improves the consistency of \texttt{Llama3.3-70B} by 18.8\% on CLEVR and by 4.5\% on Block. Even the unconstrained \texttt{GPT-5.5} reference does not achieve perfect consistency, reaching 95.5\% on Block and 82.9\% on CLEVR. This result indicates that metamodel-based verification remains beneficial even for highly capable \llm{}s.

\begin{summarybox}
\textbf{Answer to RQ1:}
\approach{} preserves the near-perfect syntactic validity of \syntax{} (within 2\%) while achieving the highest consistency across all \llm{}s and benchmarks, never performing worse than \syntax{}. 
% Because \syntax{} often leave consistency unchanged and can even reduce it, 
% These results demonstrate that decoding-time metamodel verification is essential for consistent DSL generation. 
Although the largest gains are observed for smaller \llm{}s, improvements persist across all evaluated \llm{}s.
\end{summarybox}

\subsection{RQ2: Token efficiency}

\noindent
\textbf{Rationale and setup.}
The consistency improvements in RQ1 come at additional token costs since retries are performed during decoding. RQ2 checks if this cost is justified by asking how efficiently, in terms of generated tokens, \approach{} produces consistent instances compared to baselines. 

A common alternative to constrained decoding is \emph{post-hoc verification}~\cite{he2025llm,khamsepour2025impact}. In this case, \llm{}s generate multiple outputs, and the verifier retains only consistent instances. Higher consistency may then be achieved by generating more outputs, making token efficiency the quality of interest.

We evaluate this trade-off on \texttt{Qwen3-8B}, a representative mid-sized \llm{}. We use \emph{best-of-$N$ consistency}, defined as the fraction of prompts for which at least one of $N$ sampled instances is consistent. This simulates a post-hoc verifier that always selects a consistent candidate whenever one exists. We reuse the three runs from RQ1 for \approach{} and perform additional runs for \naive{} and \syntax{} until their average cumulative token count matches that of \approach{}, whose token counts include all retries. 
% As $N$ increases, 
We record the best-of-$N$ consistency with the corresponding expected output tokens, yielding a consistency-token trade-off curve for each approach.

\begin{figure}[tb]
    \centering
    \includegraphics[width=0.8\linewidth]{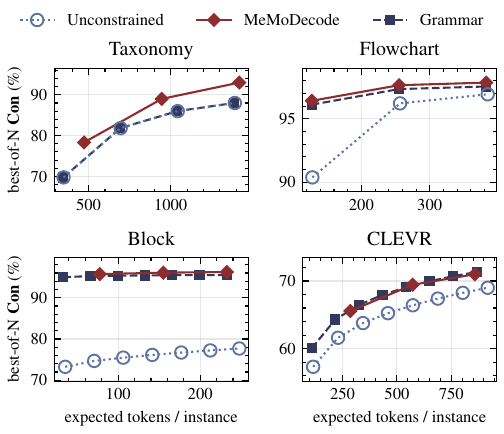}
    \caption{Token efficiency of decoding approaches on Qwen3-8B}
    \label{fig:bestofN}
\end{figure}

\noindent
\textbf{Results.}
The  best-of-$N$ consistency against the expected number of output tokens per instance for \texttt{Qwen3-8B} is plotted in \autoref{fig:bestofN}. As expected, \approach{} generates more output tokens per instance than the baselines. At $N=1$, its curve starts farthest to the right on every benchmark (e.g., approximately 470 versus 350 tokens on Taxonomy), reflecting the additional retries performed during decoding. \naive{} and \syntax{} generate a similar number of tokens because \syntax{} only applies token masking without introducing retries.

Despite this overhead, \approach{} is consistently more token-efficient at an equal token budget. Compared with \naive{}, it achieves substantially higher best-of-$N$ consistency on all four benchmarks,
% , most notably on Block, where \naive{} plateaus near 78\% while \approach{} reaches approximately 96\%. 
indicating that repeatedly sampling unconstrained outputs rarely repairs malformed instances. Compared with the stronger \syntax{} baseline, \approach{} is clearly more efficient on Taxonomy, exceeding it by about 5\% at the same token budget (approximately 93\% versus 88\%) while reaching its best consistency with substantially fewer tokens. It is also slightly more efficient on Flowchart and Block, and comparable on CLEVR, where the two curves nearly overlap because the higher per-generation cost of \approach{} is offset by more samples from \syntax{}.
Moreover, the baseline curves for \naive{} and \syntax{} flatten below \approach{} on benchmarks where it achieves higher consistency.
% , most notably on Taxonomy, where \syntax{} plateaus near 88\% while \approach{} reaches approximately 93\%. 
This result suggests that, under an equal token budget, allocating additional tokens to repeated generation alone cannot always lead to the consistency improvements achieved by \approach{}.

The figure reports only output tokens. Since post-hoc verification requires a new generation pass for every candidate instance, whereas \approach{} performs retries within a single decoding session, accounting for input tokens would likely further favor \approach{}. In practice, this advantage is partially mitigated by prompt caching, although cached input tokens are still billed by most proprietary APIs.

\begin{summarybox}
\textbf{Answer to RQ2:}
Despite its per-generation retry overhead, \approach{} is at least as token-efficient as both baselines and notably more efficient on three of the four benchmarks. These results suggest that, in many cases, allocating tokens to retry during decoding is more effective than allocating them to repeated generation.
\end{summarybox}

\begin{table*}[tbh]
\centering
\caption{Alignment scores (\%) per approach across the four benchmarks, best F1$^*$ or accuracy per benchmark and \llm{} is shown in \textbf{bold}, GPT-5.5 as an unconstrained reference.}
\label{tab:rq2-alignment}
\setlength{\tabcolsep}{3pt}
\resizebox{0.9\textwidth}{!}{%
\begin{tabular}{lcccccccccccccccccc}
\toprule
\llm{} & \multicolumn{6}{c}{Taxonomy} & \multicolumn{6}{c}{Flowchart} & \multicolumn{3}{c}{Block} & \multicolumn{3}{c}{CLEVR} \\
\cmidrule(lr){2-7} \cmidrule(lr){8-13} \cmidrule(lr){14-16} \cmidrule(lr){17-19}
 & \multicolumn{2}{c}{Unc} & \multicolumn{2}{c}{Grammar} & \multicolumn{2}{c}{\approach{}} & \multicolumn{2}{c}{Unc} & \multicolumn{2}{c}{Grammar} & \multicolumn{2}{c}{\approach{}} & \multicolumn{1}{c}{Unc} & \multicolumn{1}{c}{Grammar} & \multicolumn{1}{c}{\approach{}} & \multicolumn{1}{c}{Unc} & \multicolumn{1}{c}{Grammar} & \multicolumn{1}{c}{\approach{}} \\
\cmidrule(lr){2-3} \cmidrule(lr){4-5} \cmidrule(lr){6-7} \cmidrule(lr){8-9} \cmidrule(lr){10-11} \cmidrule(lr){12-13}
 & F1 & F1$^*$ & F1 & F1$^*$ & F1 & F1$^*$ & F1 & F1$^*$ & F1 & F1$^*$ & F1 & F1$^*$ & Acc & Acc & Acc & Acc & Acc & Acc \\
\midrule
Qwen3-0.6B & 20.0 & 8.6 & 20.0 & 8.6 & 19.9 & \textbf{14.4} & 53.9 & 49.6 & 68.1 & 61.3 & 68.1 & \textbf{67.7} & 11.9 & 18.6 & \textbf{19.3} & 1.0 & 1.8 & \textbf{6.1} \\
Qwen3-8B & 53.2 & 36.5 & 53.2 & 36.5 & 53.1 & \textbf{40.7} & 71.5 & 69.0 & 76.2 & 73.4 & 76.2 & \textbf{73.7} & 35.1 & 38.8 & \textbf{39.0} & 44.9 & 46.4 & \textbf{49.8} \\
Qwen3-32B & 59.1 & 46.7 & 59.0 & 46.6 & 59.4 & \textbf{52.3} & 67.5 & 65.8 & 76.9 & \textbf{75.0} & 76.9 & \textbf{75.0} & 42.5 & 45.3 & \textbf{46.2} & 55.0 & 53.8 & \textbf{58.7} \\
Gemma4-E2B & 26.7 & 9.1 & 36.4 & 11.7 & 36.6 & \textbf{18.3} & 0.0 & 0.0 & 72.3 & 66.8 & 72.3 & \textbf{67.3} & 30.7 & 38.9 & \textbf{39.5} & 12.2 & 11.8 & \textbf{15.9} \\
Gemma4-26B & 54.9 & 39.0 & 55.5 & 39.0 & 55.6 & \textbf{43.4} & 0.8 & 0.8 & 77.2 & 75.8 & 77.2 & \textbf{76.0} & 50.6 & 55.6 & \textbf{56.1} & 52.7 & 53.1 & \textbf{59.2} \\
Llama3.3-70B & 60.4 & 50.5 & 60.2 & 49.9 & 60.4 & \textbf{55.5} & 76.8 & 74.9 & 77.7 & 75.7 & 77.6 & \textbf{76.0} & 42.8 & 51.5 & \textbf{52.2} & 58.0 & 52.0 & \textbf{65.7} \\
\midrule
GPT-5.5 & 71.8 & 71.0 & -- & -- & -- & -- & 78.4 & 78.4 & -- & -- & -- & -- & 57.9 & -- & -- & 82.2 & -- & -- \\
\bottomrule
\end{tabular}}
\end{table*}

\subsection{RQ3: Alignment scores}
\noindent
\textbf{Rationale and setup.}
RQ1 and RQ2 focus on the validity of the generated DSL instances. Still, the objective of DSL generation is often to produce instances that correctly capture the intention of the natural language input. Under the assumptions on the alignment scorer, \autoref{thm:worst_case} predicts that \approach{} cannot achieve a lower alignment score than \syntax{}. In this RQ, we evaluate this prediction empirically. We reuse the generation results from RQ1 and evaluate them using the alignment metrics. For the structural DSLs, we report both the standard \emph{Soft F1} score and its \emph{Hard-0 F1$^*$} variant. For the executable DSLs, we report denotation \emph{accuracy} for Block and answer \emph{accuracy} for CLEVR. These metrics instantiate the alignment scorer $s$ in the theorem.

\noindent
\textbf{Results.}
\autoref{tab:rq2-alignment} reports the alignment scores across the four benchmarks. For the structural DSLs, the conventional Soft F1 changes only marginally between \syntax{} and \approach{}, remaining within $\pm0.4\%$.
%  on Taxonomy and showing almost no change on Flowchart.
This result indicates that \approach{} largely preserves the alignment quality of valid outputs. In contrast, Hard-0 F1$^*$ improves for every \llm{} except one tie (\texttt{Qwen3-32B} on Flowchart), with gains of up to 6.6\% (\texttt{Gemma4-E2B} on Taxonomy, 11.7\%\,$\rightarrow$\,18.3\%).

A similar pattern is observed for the executable DSLs, where invalid DSL instances receive an execution accuracy of zero. \approach{} improves execution accuracy for every \llm{} and benchmark, with the largest gain on CLEVR (\texttt{Llama3.3-70B}, 52.0\%\,$\rightarrow$\,65.7\%, +13.7\%). These improvements arise since \approach{} fixes inconsistent instances while preserving the quality of valid ones. However, none of the evaluated open-weight \llm{}s surpasses the unconstrained \texttt{GPT-5.5}, indicating that while decoding-time verification improves alignment, the \llm{}'s capability remains important for the achievable alignment score.

\begin{table}[tb]
\centering
\caption{Win/tie/loss of \approach{} vs. \syntax{} (See \autoref{thm:worst_case}) on the per-sample alignment score, pooled over six \llm{}s and three runs; $p$ is a one-sided Wilcoxon signed-rank test (alternative hypothesis: \approach{} $>$ \syntax{})}
\label{tab:rq2-wtl}
\setlength{\tabcolsep}{4pt}
\begin{tabular}{ccccc}
\toprule
Benchmark & W & T & L & $p$ \\
\midrule
Taxonomy & 234 & 1566 & 0 & $<\!10^{-4}$ \\
Flowchart & 109 & 5759 & 0 & $<\!10^{-4}$ \\
Block & 43 & 7139 & 0 & $<\!10^{-4}$ \\
CLEVR & 329 & 5070 & 1 & $<\!10^{-4}$ \\
\bottomrule
\end{tabular}
\end{table}

To further validate the theorem at the instance level, \autoref{tab:rq2-wtl} reports the per-sample win/tie/loss counts of \approach{} against \syntax{}, aggregated across all \llm{}s and runs. \approach{} records zero losses on three of the four benchmarks with statistically significant improvements ($p<10^{-4}$). The only loss occurs on a single CLEVR instance. Manual inspection shows that this loss is due to an incorrect backend rejection caused by integer overflow, triggered by an excessively large numerical identifier generated by the \llm{}. Although this issue can be fixed, we retain it to show that the theoretical guarantee also assumes a correct implementation.

\begin{summarybox}
\textbf{Answer to RQ3:}
\approach{} consistently improves alignment metrics satisfying the assumptions of \autoref{thm:worst_case}. It loses to \syntax{} on only one out of all evaluated instances, empirically confirming the theorem. The only exception is caused by an incorrect rejection by the backend verifier rather than a limitation of the theorem.
\end{summarybox}

\subsection{RQ4: Runtime analysis}

\noindent
\textbf{Rationale and setup.}
The previous RQs show that \approach{} improves consistency, token efficiency, and alignment. These improvements come at the cost of additional decoding-time verification and retries, which introduce runtime overhead. RQ4 quantifies this overhead by decomposing the average runtime per generated instance.

We report results for three representative open-weight \llm{}s spanning small, medium, and large scales (\texttt{Qwen3-0.6B}, \texttt{Qwen3-8B}, and \texttt{Llama3.3-70B}).
% , covering the range of \llm{} capacities evaluated in this paper. 
We omit \naive{} because its runtime closely matches \syntax{}, which differs only by token masking and performs no retries.

\begin{figure*}[tb]
    \centering
    \includegraphics[width=0.9\linewidth]{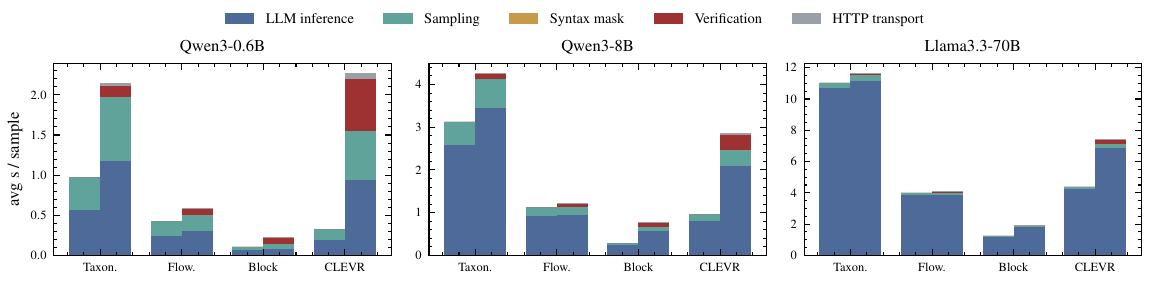}
    \caption{Average runtime breakdown; for each benchmark, the two bars correspond to \syntax{} and \approach{}, respectively}
    \label{fig:runtime}
\end{figure*}

\noindent
\textbf{Results.}
\autoref{fig:runtime} breaks down the average runtime per generated instance into \emph{\llm{} inference, sampling, syntax masking, verification, and HTTP transport} to the backend verifier, comparing \syntax{} and \approach{}. Across all benchmarks and \llm{}s, the verification step contributes only a small fraction of the total runtime. This overhead becomes almost negligible for larger \llm{}s, where runtime is dominated by \llm{} inference. For example, on Taxonomy with \texttt{Llama3.3-70B}, \approach{} increases the runtime only from around 11.2s to around 11.8s per generated instance. 

While \approach{} is slower than \syntax{}, the overhead is almost entirely due to the extra \llm{} inference required for retrying. This effect is most pronounced on CLEVR, whose complex metamodel and well-formedness constraints lead to the largest retry overhead. 
% For \texttt{Qwen3-8B}, the average runtime increases from approximately 1.0\,s to 3.0\,s per generated instance, with nearly all of the increase attributable to additional \llm{} inference. 
This relative overhead is substantially smaller for stronger \llm{}s. On CLEVR, the runtime ratio between \approach{} and \syntax{} decreases from approximately $7\times$ for \texttt{Qwen3-0.6B} to about $3\times$ for \texttt{Qwen3-8B}, and to less than $2\times$ for \texttt{Llama3.3-70B}. 
% Overall, the runtime overhead of \approach{} is dominated by the additional \llm{} inference caused by retries rather than by verification. 
% This observation suggests that future work should focus on optimizing the retrying mechanims
% reducing the need for retries rather than optimizing the verification procedure.

\begin{summarybox}
\textbf{Answer to RQ4.}
The runtime of verification itself is not significant. The overhead of \approach{} is dominated by the additional \llm{} inference required by retries and decreases as \llm{} capability increases, making retry the primary runtime bottleneck.
\end{summarybox}

\subsection{Discussion}
% \percy{practical implications, retrying mechanism, limitations}
% In this subsection, we discusses the implications from the findings in the fours research questions.

\noindent
\textbf{DSL correctness goes beyond syntactic validity.}
RQ1-RQ3 show that syntactic validity alone is insufficient to ensure correctness of generated DSL instances. Grammar-constrained decoding is effective at enforcing the concrete syntax, but it can still generate inconsistent instances and, in some cases, even achieve lower consistency than unconstrained decoding. By verifying the generated outputs at the metamodel level, \approach{} provides an additional layer of correctness beyond syntax. This leads to systematic improvements in consistency and, for alignment metrics that treat invalid instances as unusable, also improves alignment with the intended output.

\noindent
\textbf{LLM capacity and guarantee.}
RQ1 and RQ3 show that larger \llm{}s can improve both consistency and alignment. However, they still fail to produce consistent DSL instances in many cases. Even GPT-5.5 generates inconsistent instances on three of the four benchmarks, reaching only 82.9\% consistency on \clevr{}. These results suggest that while \llm{} scaling reduces the frequency of inconsistent instances, it does not eliminate the need for formal verification, especially when the generated results are intended for downstream SE applications.

\noindent
\textbf{Runtime overhead and retry.}
RQ4 shows that the main runtime overhead of \approach{} comes from the additional \llm{} inference introduced by regenerating tokens after rejected syntactic units, while metamodel-level verification accounts for only a small fraction of the total runtime. 
% This result suggests that the main cost is due to regenerating tokens after rejected syntactic units. 
Therefore, future work should focus on more efficient retry and recovery strategies, such as finer-grained backtracking, adaptive retry budgets, verifier-guided resampling, or decoding-time repair mechanisms that reduce the need for regeneration.

\noindent
\textbf{Limitations and future work.}
While \approach{} improves over both baseline decoding approaches, several limitations remain for future work. First, this paper focuses on unfixable constraints, while accepting partial models with fixable violations.
% but many constraints are fixable later in generation, such as requiring an \texttt{And} node to have two incoming edges in \clevr{}. 
Future work could explore how to use these fixable violations as decoding-time heuristics. 
% Second, \approach{} assumes that the DSL provides an explicit grammar, metamodel, well-formedness constraints, and a mapping between grammar and metamodel. Although these assumptions are natural for many SE applications, other settings may require manual effort to construct such specifications. Future work could investigate how to derive complete DSL specifications from incomplete ones.  
Currently, \approach{} focuses only on the final output generated by the \llm{}. Future work could extend the framework to reasoning \llm{}s, where verification may guide not only the generated DSL tokens but also intermediate reasoning traces.  %, enabling backtracking over earlier reasoning steps.
Finally, \approach{} could be integrated with DSL frameworks such as Xtext~\cite{xtext} and JetBrains MPS~\cite{jetbrains_mps} to streamline DSL development with verified LLM-based instance generation.

\subsection{Threats to validity}

\noindent
\textbf{Internal validity.}
Randomness in \llm{} outputs could affect the comparison between approaches. We mitigate this by averaging results over three random seeds and evaluating all approaches under identical decoding conditions. 
% Statistical significance is assessed using the Wilcoxon signed-rank test. 
We also report win/tie/loss counts for the alignment scores, complemented with the $p$-values of the Wilcoxon signed-rank test. 
% The single observed violation of \autoref{thm:worst_case} is caused by an implementation bug in the backend solver rather than the decoding procedure.

\noindent
\textbf{External validity.}
\approach{} currently requires token-level logit access and is therefore evaluated only on open-weight \llm{}s. 
% While the formulation itself is \llm{}-agnostic, current commercial inference APIs do not provide per-step logit access, preventing a direct implementation.
We evaluate four representative DSLs covering both structural and executable semantics. The effectiveness of \approach{} on DSLs with substantially richer semantic constraints, or much larger instances, remains to be investigated. In addition, the current implementation inherits the expressiveness of \refinery{}; however, the overall approach can be integrated with any backend verifier.

\noindent
\textbf{Construct validity.}
Our evaluation intentionally includes both alignment metrics satisfying the assumptions of \autoref{thm:worst_case} (F1* and execution accuracy) and a metric that does not (F1), allowing us to distinguish theoretical guarantees from empirical behavior. Consistency and best-of-$N$ consistency rely on the correctness of the underlying verifier. Finally, execution accuracy is an imperfect proxy for true alignment correctness because an incorrect executable DSL instance may occasionally produce the correct result by chance.
\section{Conclusion}
\label{sec:future_plan}
This paper presented \approach{}, a metamodel-guided decoding framework for generating consistent DSL instances with \llm{}s. Unlike other decoding approaches, \approach{} lifts verification to the metamodel level by mapping generated text to a partial model and checking whether it remains completable with respect to the metamodel and well-formedness constraints. Under a sound verifier, we showed that \approach{} never performs worse than grammar-constrained decoding for alignment metrics satisfying natural assumptions. Across four DSL benchmarks and six open-weight \llm{}s, \approach{} improves consistency and alignment while preserving syntactic validity and introducing only modest runtime overhead. Future work includes reducing runtime through more efficient retry strategies, extending \approach{} to fixable constraints and reasoning \llm{}s, and integrating it with DSL frameworks to streamline adoption.
% \input{sections/99.notes_from_aren}

% \clearpage

\bibliographystyle{IEEEtran}
\bibliography{biblio}
\end{document}